\def\bh\mathbf{h}
\def\cb{{\mathcal B}}
\def\cw{{\mathcal W}}
\def\bh{{\mathbf h}}
\def\a{\alpha}
\def\b{\beta}
\def\tr{{\rm Tr}}
\def\L{\Lambda}
\def\G{\Gamma}
\def\ffi{\varphi}
\def\Tr{\mathrm{Tr}}
\def\<{\langle}
\def\>{\rangle}
\def\1{\mathbf{1}}
\def\cw{\cal W}
\def\cal{\mathcal}
\def\id{{\bf 1}\!\!{\rm I}}
\newtheorem{theorem}{Theorem}[section]
\newtheorem{corollary}[theorem]{Corollary}
\newtheorem{lemma}[theorem]{Lemma}
\newtheorem{definition}[theorem]{Definition}
\newtheorem{example}[theorem]{Example}
\newtheorem{remark}[theorem]{Remark}
\def\cb{{\mathcal B}}
\def\cw{{\mathcal W}}
\def\bh{{\mathbb H}}
\def\a{\alpha}
\def\b{\beta}
\def\Tr{{\rm Tr}}
\def\L{\Lambda}
\def\G{\Gamma}
\def\ffi{\varphi}
\def\Tr{\mathrm{Tr}}
\def\<{\langle}
\def\>{\rangle}
\def\1{\mathbf{1}}
\def\cw{\cal W}
\def\cal{\mathcal}
\def\bh{\mathbf{h}}
\def\id{{\bf 1}\!\!{\rm I}}
\begin{document}

\begin{center}
{\Large {\bf  Entropy of Quantum Markov states on Cayley trees}}\\[1cm]
\end{center}

\begin{center}
{\large {\sc Farrukh Mukhamedov$^*$}}\\[2mm]
\textit{ Department of Mathematical Sciences,\\
College of Science, United Arab Emirates University,  \\
P.O. Box 15551,Al Ain, Abu Dhabi, UAE}\\
E-mail: {\tt far75m@yandex.ru, \ farrukh.m@uaeu.ac.ae}\\
$^*$ Corresponding author.
\end{center}

\begin{center}
{\sc Abdessatar Souissi$^*$  }\\
\textit{
  Department of Accounting, College of Business Management\\
Qassim University, Buraydah, Saudi Arabia} \\
E-mail: {\tt a.souaissi@qu.edu.sa}\\
$^*$ Corresponding author.
\end{center}

\begin{abstract}
In this paper, we continue the investigation of quantum Markov states (QMS) and define their mean
entropies. Such entropies are explicitly computed under certain conditions. The present work takes a huge leap forward at tackling one of the most important open problems in quantum probability which concerns the calculations of mean entropies of quantum Markov fields. Moreover, it opens new perspective for the generalization of many interesting results related to the one dimensional quantum Markov states
and chains to multi-dimensional cases.\\
\vskip 0.3cm \noindent {\it Mathematics Subject Classification}:
46L53, 60J99, 46L60, 60G50.\\
{\it Key words}: Quantum Markov state; Cayley tree; mean entropy;
\end{abstract}

\section{Introduction }\label{intr}

It is well-known that the classical and quantum dynamical entropies are important tools in many areas
of mathematics and play a huge role in information sciences \cite{OhyaPetz} and note that entropies of classical Markov chains are just as significant in statistical mechanics [10]. Quantum dynamical entropy has been studied by Connes and Sormer\cite{CS75}, Connes, Narnhofer, Thirring \cite{CNT87} and many others. Note that recently, the quantum dynamical entropy and the quantum dynamical mutual entropy were defined by Ohya in terms of complexity \cite{OhyaPetz}.  The notion of quantum Markov chain (QMC) was formulated by means of the transition expectation introduced by Accardi \cite{[Ac74f]}. Therefore, in \cite{AOW97,OW19} dynamical entropy through a quantum Markov chain has been introduced and calculated for some simple models. In this similar fashion and direction a mutual entropy in quantum Markov chains scheme has been investigated in \cite{Su97}. Note the significance of these contributions as these calculations are vital in both physics and information theory.

On the other hand, the study of quantum many-body systems has lived an explosion of results. This is specifically true in the field of Tensor Networks. Recently, "Matrix Product States", and more generally
"Tensor Network States" have played a crucial role in the description of the whole quantum systems \cite{CV,Or}.
We point out that an interesting mathematical approach to quantum states on tensor networks has closely
tied up with QMC on tensor product of matrix algebras \cite{[Ac74f],GZ}. Therefore, QMC have found massive
applications in several research domains of quantum statistical physics and information \cite{fannes,F.N.W},\cite{[CiPe-Ga-Sc-Ver17]},\cite{[RoOs96]}.

As we have mentioned, it is very important to develop a method which enables to compute the entropy
for quantum systems ((\cite{CNT87,Park92}). In \cite{Park94,Park} the dynamical entropy was computed in the sense of Connes,
Narnhofer, and Thirring \cite{CNT87} of shift automorphism of QMC. Moreover, the dynamical
entropy was proved to be equal to the mean entropy. It is important to emphasize that mean entropies calculations of
QMC have been obtained in several papers \cite{Fid01,Ohn04,Petz94}. However, in all these investigates QMC were
considered over a one dimensional lattice.

On the other hand, it is crucial to know certain formulas for the mean entropies of the
Gibbs measures on each of a sequence of large finite graphs. When the graph is non-amenable, then such a calculation becomes very tricky \cite{Alp15}. Recently, in \cite{AP2018} the mentioned formula has been found for Gibbs measured defined over regular trees (in particularly, Cayley trees).

We point out that, in \cite{AOM,MBS161,MBS162,MR2004} we have started to investigate particular classes of
quantum Markov chains (QMC) associated to the Ising types models on the Cayley
trees. It turned out that the
above considered QMCs fall into a special class called
\textit{quantum Markov states (QMS)} (see \cite{MS19,MS20}). Furthermore, in \cite{FM,MS19,MS20} a description of
QMS has been carried out. It is worth to stress that the considered QMS had the
Markov property not only with respect to levels of the considered tree, but
also with regard to the interaction domain at each site, which has a finer structure, and through a family of suitable
quasi-conditional expectations which are \textit{localized}
\cite{MS19,MS21}. Such a localization property is essential for the integral decomposition of QMS, since it takes into
account the finer structure of conditional expectations and
filtration \cite{[AcFiMu07],[AcSouElG20]}.

In this paper, we continue to investigate such kind of QMS and define for these states their mean entropies and under some conditions we are able to explicitly compute them.
The present work is another step towards one of the most important open problems in quantum probability, which concerns the calculations of means entropies of quantum Markov fields.
Moreover, it opens a new perspective for the generalization of  many interesting results related to one dimensional quantum  Markov states and chains to multi-dimensional cases.

\section{Preliminaries}

Let $\Gamma^k_+ = (V,E)$ be a semi-infinite Cayley tree of order
$k\geq 1$ with the root $x^0$ (i.e. each vertex of $\Gamma^k_+$
has exactly $k+1$ edges, except for the root $x^0$, which has $k$
edges). Here $L$ is the set of vertices and $E$ is the set of
edges. The vertices $x$ and $y$ are called {\it nearest neighbors}
and they are denoted by $l=<x,y>$ if there exists an edge
connecting them. A collection of the pairs
$<x,x_1>,\dots,<x_{d-1},y>$ is called a {\it path} from the point
$x$ to the point $y$. The distance $d(x,y), x,y\in V$, on the
Cayley tree, is the length of the shortest path from $x$ to $y$.

Recall a coordinate structure in $\G^k_+$:  every vertex $x$
(except for $x^0$) of $\G^k_+$ has coordinates $(i_1,\dots,i_n)$,
here $i_m\in\{1,\dots,k\}$, $1\leq m\leq n$ and for the vertex
$x^0$ we put $(0)$.  Namely, the symbol $(0)$ constitutes level 0,
and the sites $(i_1,\dots,i_n)$ form level $n$ (i.e. $d(x^0,x)=n$)
of the lattice.


The hierarchical structure on the considered tree is expressed as follows.
\[
W_n = \{ x\in V \, : \, d(x,x_0) = n\}
\qquad \Lambda_n =
\bigcup_{k=0}^n W_k, \qquad  \L_{[n,m]}=\bigcup_{k=n}^mW_k, \
(n<m).
\]

For $x\in \Gamma^k_+$, $x=(i_1,\dots,i_n)$ (i.e. $x\in W_n$) define the set of its direct successors vertices by
$$
S(x) := \{y\in \Lambda_n^c:\quad <x,y> \}\subset W_{n+1}
$$
and its $k^{th}$ successors  on the rooted tree $\Gamma^k_+$  is defined by the induction as follows
\begin{eqnarray*}
S_1(x) &=& S(x)\\
  S_{k+1}(x) &=& S(S_k(x)),\, \,   \forall k\ge 1.
\end{eqnarray*}
Put
\begin{equation}\label{S_1n_S_infty}
T(x) = \bigcup_{k\ge0}S_{k}(x)
\end{equation}
where $S_0(x) = x.$
Denote
$$
\overrightarrow{S}(x) = \{(x,1), (x,2), \cdots, (x,k)\},
$$
here $(x,i)$ means that $(i_1,\dots,i_n,i)$.


Since the Cayley tree $\Gamma^k_+$  is regular and each vertex has exactly $k$ direct successors then according to the above enumeration, we define $k$ shifts on the tree as follows: for each $j\in \{1,2,\dots, k\}$ the $j^{th}$ shift is given by
\begin{equation}\label{shifts}
  \alpha_j(x) = (j, x)= (j, i_1, i_2, \cdots, i_n)\in W_{n+1}.
\end{equation}
for every $x=(i_1, i_2, \dots, i_n)\in W_n$. Let $g = (j_1, j_2, \cdots, j_N)\in V$ one defines
$$
\alpha_g(x) := \alpha_{j_1}\circ \alpha_{j_2}\circ\cdots\circ \alpha_{j_N}(x) = (j_1, j_2, \cdots, j_N, i_1, i_2,\cdots, i_n).
$$
The shift $\alpha_g$ maps the Cayley tree $\Gamma^k_+$ onto  its sub-tree $T_x$ defined by (\ref{S_1n_S_infty}).
It follows that $\alpha_x(0) = x$ and $\alpha_x(\Lambda_n)= S_n(x)$.

Let $\mathcal{A}$ be a $C^*$-algebra with unit $\id$. For each $x\in V$ we denote $\mathcal{A}_x$ the $C^*$-algebra which is the same as $\mathcal{A}$, i.e. $\mathcal{A}_x\equiv\mathcal{A}$. For each  $\Lambda\subset V$  we denote by $\mathcal{A}_\Lambda$ the algebra generated by $\{\mathcal{A}_x:\, x\in\Lambda\}$. In these notations, one has
$$
\mathcal{A}_{ \Lambda_{n}} = \bigotimes_{x\in\Lambda_n}\equiv  \mathcal{A}_{ \Lambda_{n}}\otimes\id_{W_{n+1}}\subset \mathcal{A}_{ \Lambda_{n+1}}.
$$
Consider the local $*-$algebra
\begin{equation}\label{AVloc}
  \mathcal{A}_{V;\, loc} := \bigcup_{n\in\mathbb{N}}\mathcal{A}_{\Lambda_n}
\end{equation}
One has
$$
\mathcal{A}_V := \overline{\mathcal{A}_{V;\, loc}}^{C^*}.
$$
By $\mathcal{S}(\mathcal{B})$ we denote the set of states on a C$^*$--algebra $\mathcal{B}$.

\section{Quantum Markov states  and chains on Trees}

Recall that a transition expectation is a completely positive identity preserving map from a C$^*$--algebra into its C$^*$-algebra.
\begin{definition}
Let $\mathcal{C}\subset\mathcal{B}\subset\mathcal{A}$ be a triplet of unitary  C$^*$--algebras. A map $E: \mathcal{A}\to \mathcal{B}$ is called quasi-conditional expectation w.r.t. the given triplet if it is completely positive identity preserving and satisfies the following property
\begin{equation}
    E(ca) = cE(a); \quad \forall a\in\mathcal{A}, \forall c\in\mathcal{C}.
\end{equation}
\end{definition}
The typical form of transitions expectations $E_{\Lambda_n}$ w.r.t. the triplet $\mathcal{A}_{\Lambda_{n-1}}\subset\mathcal{A}_{\Lambda_n}\subset\mathcal{A}_{\Lambda_{n+1}}$ that we are going to investigate in the have the following form
\begin{equation}\label{CE_TE_Ln}
{E}_{\Lambda_n} = id_{\Lambda_{n-1}}\otimes\mathcal{E}_{W_n}
\end{equation}
where
$\mathcal{E}_{W_n}$ is a transition expectation from $\mathcal{A}_{ [N,n+1] }\to \mathcal{A}_{W_n}$.
Note that $\mathcal{E}_{W_n}$ is assumed to be  localized in the sense of \cite{MS19}. i.e. for each $x\in W_n$ there exists a transition expectation $\mathcal{E}_x$ from $\mathcal{A}_{\{x\}\cup S(x)}$ into $\mathcal{A}_x$ such that
\begin{equation}\label{localizedTE}
    \mathcal{E}_{W_n} = \bigotimes_{x\in W_n}\mathcal{E}_x.
\end{equation}

The shifts $\alpha_j$  given by \eqref{shifts} acts on the algebra  the algebra $\mathcal{A}_V$ as follows:
\begin{equation}\label{shift_algebra}
\alpha_j\left( \bigotimes_{ x\in  \Lambda_{n}}a_x \right):=
 \id^{(o)}\otimes\bigotimes_{i=0}^{n} a_x^{(j,x)}.
\end{equation}
\begin{definition}\label{def_TInv_MS}
  A state $\varphi$ on $\mathcal{A}_{L}$ is said to be \textit{translation-invariant} if
 \begin{equation}\label{trans_inv_state}
   \varphi\circ \alpha_j = \varphi
 \end{equation}
for every $j \in \{1,2,\cdots, k\}$.
\end{definition}

\begin{remark}
The above definition generalizes the notion of translation-invariant states on the one-dimensional lattice which corresponds to the case $k=1$. Note that the entropy of quantum Markov chains on the quantum spin lattice has been studied in many papers (see  \cite{Park94, AOW97, Go98})
\end{remark}

 \begin{definition}\label{QMCdef} \cite{[AcSouElG20]} Let  $\phi_o\in \mathcal{S}(\mathcal{A}_o)$ be a (initial) state,
 $\{E_{ \Lambda_{n}}\}$ a sequence of quasi-conditional expectations w.r.t.
the triple $\mathcal{A}_{{\Lambda}_{n-1}}\subseteq \mathcal{A}_{{\Lambda}_{n}}\subseteq\mathcal{A}_{{\Lambda}_{n+1}} $ and
a sequence $h_{n}\in\mathcal{A}_{V_n, +}$ of boundary conditions such that for each $a\in \mathcal{A}_V$  the limit
\begin{equation}\label{lim_Mc}
\varphi(a): = \lim_{n\to\infty} \phi_0\circ E_{\Lambda_{0}}\circ
E_{ \Lambda_{1}} \circ \cdots \circ E_{\Lambda_{n}}(h_{n+1}^{1/2}ah_{n+1}^{1/2})
\end{equation}
exists in the weak-*-topology and defines a state.  Then the triplet $(\phi_o, (E_{ \Lambda_{n}})_n, (h_{n})_n)$ is called a quantum Markov chain on $\mathcal{A}_V$. In this case the limiting state $\varphi$ given by (\ref{lim_Mc})
is also called quantum Markov chain (QMC).
\end{definition}

\begin{definition}
A quantum Markov chain $\varphi$ on $\mathcal{A}_V$ is called \textit{quantum Markov state(QMS)} if one has
\begin{equation}\label{Markov_state_eq}
  \varphi\lceil_{\mathcal{A}_{\Lambda_n}}\circ E_{\Lambda_n} = \varphi\lceil_{\mathcal{A}_{\Lambda_{n+1}}}.
\end{equation}
 \end{definition}

Due to the contributions made in \cite{AFri83,GZ}, the conditional expectations can be used in place of quasi conditional expectations.
However, the boundary condition plays a more physically significant role in the existence of phase transitions (see for instance  \cite{AMSa1,MBS161, MBS162, Sou21}). In \cite{MS19, MS20, MS21}, the authors studied in detail the structure of quantum Markov states associated with localized transitions expectations on the Cayley tree.

 Let us now deal with the  entropy of states. Assume that $\varphi$ is a state on $\mathcal{A}_V$ and $\Lambda \subset V$ is a finite region.
 Let $S(\varphi\lceil_{\mathcal{A}_{\Lambda}}$) be the von Neumann entropy of the of the state $\varphi$ on the algebra $\mathcal{A}_\Lambda$ (see \cite{OhyaPetz}), where $\varphi\lceil_{\mathcal{A}_{\Lambda}}$ is the state on $\mathcal{A}_\Lambda$ defined by the restriction of $\varphi$ on $\mathcal{A}_\Lambda$. If $\mathcal{D}^{\varphi}_\Lambda$ is the density matrix of the state $\varphi\lceil\mathcal{A}_\Lambda$, then
 \begin{equation}
     S(\varphi\lceil_{\mathcal{A}_\Lambda}) = - \mathrm{Tr}(D_{\Lambda}^{\varphi}\log D_{\Lambda}^{\varphi})
 \end{equation}
 In the one dimensional case, where $V = \mathbb{N}$, using the subadditivity of $S(\varphi\lceil_{\mathcal{A}_\Lambda})$, it was shown that the limit
 \begin{equation}\label{mean_entropy1D}
     s(\varphi) := \lim_{n\to\infty}\frac{1}{n+1}S(\varphi\lceil_{\mathcal{A}_{[0,n]}})
 \end{equation}
 exists. The quantity $s(\varphi)$ is called the \textit{mean entropy} of the state $\varphi$ \cite{Fid01,Petz94}. In the next section, we are going to discuss such an entropy for quantum Markov states on the Cayley trees.


 \section{ Mean Entropy for quantum Markov states on Trees}

In this section, we introduce the mean entropy for quantum Markov states defined on trees.  Let $\varphi$ be a locally faithful quantum Markov state on $\mathcal{A}_V$, i.e.  $\varphi\lceil_{\mathcal{A}_{\Lambda_n}}$ is faithful for all $n\geq 1$. Indeed, by means of \cite{Ohn04} one can establish that if $\varphi$ is a locally faithful QMS, then $\varphi$ is faithful.
From the tree structure, it is quite natural to define a generalization of (\ref{mean_entropy1D}) as follows
     \begin{equation}\label{mean_entropy_tree}
     s(\varphi) := \lim_{n\to\infty}\frac{1}{|\Lambda_n|}S(\varphi\lceil_{\mathcal{A}_{\Lambda_n}})
 \end{equation}

Let $\{E_{\Lambda_n}\}_{n\ge 0}$ be the sequence conditional expectations w.r.t. the triplet $\mathcal{A}_{\Lambda_{n-1}}\subset \mathcal{A}_{\Lambda_{n}}\subset \mathcal{A}_{\Lambda_{n+1}}$ associated with $\varphi$. We consider the Radon-Nikodym derivatives $D^{\varphi}_{\Lambda_n}$ of the state $\varphi_{\Lambda_n}:= \varphi\lceil_{\mathcal{A}_{\Lambda_n}}$ with respect to the trace $\mathrm{Tr}_{\mathcal{A}_{\Lambda_n}}$  on $\mathcal{A}_{\Lambda_n}$, i.e.
$$
\varphi_{\Lambda_n}(\cdot ) = \mathrm{Tr}_{\mathcal{A}_{\Lambda_n}}(D^{\varphi}_{\Lambda_n} \cdot )
$$
The transition expectation $\mathcal{E}_{W_n}$ associated with $E_{\Lambda_n}$ through (\ref{CE_TE_Ln})  is localized in the sense of \cite{MS19, MS20}. Then it satisfies (\ref{localizedTE}) for certain localized transition expectations $\mathcal{E}_x$ from $\mathcal{A}_{\{x\}\cup S(x)}$ into $\mathcal{A}_x$ with  $x\in W_n$.

For $X\subseteq Y$  two finite  sub-sets  of $V$,  we denote $T^Y_X$  the map defined by linear extension of
$$
 T^{Y}_{X}(a_X\otimes a_{Y\setminus X}) = \Tr(a_{Y\setminus X})a_X
$$
 One has
$$
T^{Y}_{X}(D^{\varphi}_{Y})  =  D^{\varphi}_{X}
$$
According to \cite{Mo06} the map $T_X^Y$ is the unique Umegaki conditional  expectation  from $\mathcal{A}_Y$ to $\mathcal{A}_X$ with respect to the tracial state.

 Let $A, B$ and $C$ are mutually disjoint finite subsets of the vertex $V$. One can easily check that
 \begin{eqnarray}
 &&T^{A,B,C}_{A,B}\lceil_{\mathcal{A}_{B\cup C}}  = T^{B,C}_{B} \label{T1}\\
 && \mathcal{A}_B = \mathcal{A}_{A\cup B}\cap \mathcal{A}_{B\cap C}\label{T2}\\
 && T^{A,B,C}_{B,C}T^{A,B,C}_{A,B} = T^{A,B,C}_{A,B}T^{A,B,C}_{B,C} = T^{A,B,C}_{C}\label{T3}
\end{eqnarray}

For each $x\in W_n$, assume that $K_{\{x\}\cup S(x)}\in\mathcal{A}_{\{x\}\cup S(x)}$ is the conditional density amplitude of the transition  expectation $\mathcal{E}_x$, i.e.
$$
\mathcal{E}_x (\,  \cdot \, ) = T_{\{x\}}^{\{x\}, S(x)} \left(K_{\{x\}\cup S(x)}^{*}\, \cdot\,  K_{\{x\}\cup S(x)}  \right)
$$
Using (\ref{localizedTE}),  the conditional density amplitude of the transition expectation $\mathcal{E}_{W_n}$ has the following form
\begin{equation}\label{Knn+1=prodKxs(x)}
     K_{[n,n+1]} := \bigotimes_{x\in W_n}K_{\{x\}\cup S(x)} \in\mathcal{A}_{  \Lambda_{[n,n+1]}  }
\end{equation}

\begin{lemma}\label{lem_density}
 Let $\varphi$ be a QMS on $\mathcal{A}_V$ together with its sequence of quasi-conditional expectations $E_{\Lambda_n}$.  In the above notations, one has
    \begin{equation}\label{DnDn+1}
    D^\varphi_{\Lambda_{n+1}} = K_{ [n,n+1] }D^\varphi_{\Lambda_{n}} K_{ [n,n+1] }^{\ast}
    \end{equation}
\end{lemma}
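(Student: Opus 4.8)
The plan is to unwind the definition of the QMS structure one level at a time and compare the density matrix $D^{\varphi}_{\Lambda_{n+1}}$ with $D^{\varphi}_{\Lambda_n}$ through the localized quasi-conditional expectation $E_{\Lambda_{n+1}} = \mathrm{id}_{\Lambda_n}\otimes \mathcal{E}_{W_{n+1}}$. First I would use the Markov-state property \eqref{Markov_state_eq}, namely $\varphi\lceil_{\mathcal{A}_{\Lambda_{n+1}}} = \varphi\lceil_{\mathcal{A}_{\Lambda_n}}\circ E_{\Lambda_{n+1}}$ — actually, what is needed here is the relation one level down: $\varphi\lceil_{\mathcal{A}_{\Lambda_{n+1}}} = \varphi\lceil_{\mathcal{A}_{\Lambda_n}}\circ E_{\Lambda_n}$ applied to elements of $\mathcal{A}_{\Lambda_{n+1}}$, together with the fact that $E_{\Lambda_n} = \mathrm{id}_{\Lambda_{n-1}}\otimes \mathcal{E}_{W_n}$ with $\mathcal{E}_{W_n}\colon \mathcal{A}_{\Lambda_{[n,n+1]}}\to \mathcal{A}_{W_n}$. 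Writing $\mathcal{E}_{W_n}(\,\cdot\,) = T^{W_n,W_{n+1}}_{W_n}\!\bigl(K_{[n,n+1]}^{*}\,\cdot\, K_{[n,n+1]}\bigr)$ from \eqref{Knn+1=prodKxs(x)} and its defining formula, the idea is to translate the identity $\varphi_{\Lambda_{n+1}} = \varphi_{\Lambda_n}\circ(\mathrm{id}_{\Lambda_{n-1}}\otimes \mathcal{E}_{W_n})$ into a statement about Radon–Nikodym derivatives with respect to the trace.

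The key computational step is the following. For any $a\in\mathcal{A}_{\Lambda_{n+1}}$,
\begin{align*}
\mathrm{Tr}_{\Lambda_{n+1}}\!\bigl(D^{\varphi}_{\Lambda_{n+1}} a\bigr)
&= \varphi_{\Lambda_{n+1}}(a) = \varphi_{\Lambda_n}\bigl((\mathrm{id}_{\Lambda_{n-1}}\otimes\mathcal{E}_{W_n})(a)\bigr)\\
&= \mathrm{Tr}_{\Lambda_n}\!\Bigl(D^{\varphi}_{\Lambda_n}\,(\mathrm{id}_{\Lambda_{n-1}}\otimes\mathcal{E}_{W_n})(a)\Bigr)\\
&= \mathrm{Tr}_{\Lambda_n}\!\Bigl(D^{\varphi}_{\Lambda_n}\,(\mathrm{id}\otimes T^{W_n,W_{n+1}}_{W_n})\bigl(K_{[n,n+1]}^{*} a K_{[n,n+1]}\bigr)\Bigr).
\end{align*}
Now I would invoke the trace-adjointness of the Umegaki conditional expectation $T^{Y}_{X}$ — i.e. $\mathrm{Tr}_Y(b\, T^{Y}_X(c)) = \mathrm{Tr}_Y(T^{Y}_X(b)\,c)$ for $b\in\mathcal{A}_X$ — applied with $X=\Lambda_n$, $Y=\Lambda_{n+1}$ and $b = D^{\varphi}_{\Lambda_n}\in\mathcal{A}_{\Lambda_n}$, to pull $D^{\varphi}_{\Lambda_n}$ inside and turn the right-hand side into $\mathrm{Tr}_{\Lambda_{n+1}}\!\bigl(D^{\varphi}_{\Lambda_n} K_{[n,n+1]}^{*} a K_{[n,n+1]}\bigr) = \mathrm{Tr}_{\Lambda_{n+1}}\!\bigl(K_{[n,n+1]} D^{\varphi}_{\Lambda_n} K_{[n,n+1]}^{*}\, a\bigr)$, using cyclicity of the trace and the fact that $K_{[n,n+1]}\in\mathcal{A}_{\Lambda_{[n,n+1]}}\subseteq\mathcal{A}_{\Lambda_{n+1}}$ commutes appropriately under the trace. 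Since $a$ was arbitrary and $\mathcal{A}_{\Lambda_{n+1}}$ is finite-dimensional (or the trace is faithful), one concludes $D^{\varphi}_{\Lambda_{n+1}} = K_{[n,n+1]} D^{\varphi}_{\Lambda_n} K_{[n,n+1]}^{*}$, which is \eqref{DnDn+1}.

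The main obstacle I anticipate is bookkeeping around the tensor-factor domains: one must be careful that $\mathrm{id}_{\Lambda_{n-1}}\otimes\mathcal{E}_{W_n}$ really has the effect of $T^{\Lambda_{n+1}}_{\Lambda_n}$ conjugated by $K_{[n,n+1]}$ on all of $\mathcal{A}_{\Lambda_{n+1}}$ and not merely on the subalgebra $\mathcal{A}_{\Lambda_{[n,n+1]}}$ where $\mathcal{E}_{W_n}$ was originally defined; this requires checking that $K_{[n,n+1]}$ acts as the identity on the $\mathcal{A}_{\Lambda_{n-1}}$ factor and that the partial-trace identities \eqref{T1}–\eqref{T3} patch the local map $T^{W_n,W_{n+1}}_{W_n}$ together with $\mathrm{id}_{\Lambda_{n-1}}$ into the global $T^{\Lambda_{n+1}}_{\Lambda_n}$. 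A secondary point is justifying the adjoint relation $\mathrm{Tr}(b\,T^Y_X(c)) = \mathrm{Tr}(T^Y_X(b)\,c)$, but this is immediate from the characterization (cited from \cite{Mo06}) of $T^Y_X$ as the trace-preserving conditional expectation, since $T^Y_X(b)=b$ for $b\in\mathcal{A}_X$. Once the domain matching is settled, the rest is a short trace manipulation.
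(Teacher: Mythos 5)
Your proposal is correct and follows essentially the same route as the paper's proof: express $E_{\Lambda_n}$ through the conditional density amplitude $K_{[n,n+1]}$, use the trace-compatibility of the partial trace $T^{W_n,W_{n+1}}_{W_n}$ to pass from $\mathrm{Tr}_{\Lambda_n}$ to $\mathrm{Tr}_{\Lambda_{n+1}}$, apply cyclicity of the trace, and invoke the Markov-state equation \eqref{Markov_state_eq} together with faithfulness of the trace to identify the density matrices. The only difference is cosmetic (you start from $\varphi_{\Lambda_{n+1}}(a)$ and apply the Markov property first, whereas the paper computes $\varphi\lceil_{\mathcal{A}_{\Lambda_n}}(E_{\Lambda_n}(a))$ and invokes it last), and your remarks on the domain bookkeeping make explicit a point the paper leaves implicit.
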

\begin{proof} Let $a\in \mathcal{A}_{\Lambda_{n+1}}$, one has
\begin{eqnarray*}
\varphi\lceil_{\mathcal{A}_{\Lambda_n}}(E_{\Lambda_n}(a)) &=& \mathrm{Tr}_{\Lambda_{n}}\left(D^{\varphi}_{\Lambda_n}
\mathrm{T}^{W_n, W_{n+1}}_{W_n}\left(K_{ [N,n+1] }^{\ast}a K_{ [N,n+1] }\right)\right)\\
&=& \mathrm{Tr}_{\Lambda_{n+1}}\left(D^{\varphi}_{\Lambda_n}
K_{ [N,n+1] }^{\ast}a K_{ [N,n+1] }\right)\\
&=& \mathrm{Tr}_{\Lambda_{n+1}}\left(K_{ [N,n+1] }D^{\varphi}_{\Lambda_n}
K_{ [N,n+1] }^{\ast}a \right)
\end{eqnarray*}

Using (\ref{Markov_state_eq}), we arrive at the required identity.
\end{proof}
\begin{theorem}\label{thm_entropy1} Let $\varphi$ be a QMS on $\mathcal{A}_V$. Then
   \begin{equation}\label{entropy_dentity}
      S(\varphi \lceil_{\mathcal{A}_{\Lambda_{n+1}}}) +  S(\varphi \lceil_{\mathcal{A}_{W_{n}}}) = S(\varphi \lceil  \mathcal{A}_{\Lambda_{n}})+ S(\varphi\lceil_{\mathcal{A}_{  \Lambda_{[n,n+1]}  } }).
      \end{equation}
If in addition  $K_{ [n,n+1] }$ and $D^\varphi_{\Lambda_{n]}}$ commute, then
     \begin{equation}\label{S(phi_n+1)-S(phi_n)}
    S(\varphi \lceil_{\mathcal{A}_{\Lambda_{n+1}}})-  S(\varphi \lceil_{\mathcal{A}_{\Lambda_{n}}})   = -2 \sum_{x\in W_{n}}\varphi_{\mathcal{A}_{ \Lambda_{[n,n+1]} }}\left( \log |K_{\{x\}\cup \overrightarrow{S}(x)}|\right).
    \end{equation}
\end{theorem}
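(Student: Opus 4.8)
The plan is to derive the entropy identity \eqref{entropy_dentity} from the structural relation \eqref{DnDn+1} of Lemma~\ref{lem_density}, and then specialize under the commutation hypothesis. First I would observe that, by the localization \eqref{localizedTE} and \eqref{Knn+1=prodKxs(x)}, the conditional density amplitude $K_{[n,n+1]}$ is supported on $\Lambda_{[n,n+1]}=W_n\cup W_{n+1}$, so $D^\varphi_{\Lambda_{n+1}}=K_{[n,n+1]}D^\varphi_{\Lambda_n}K_{[n,n+1]}^{\ast}$ expresses the density on $\Lambda_{n+1}$ in terms of the density on $\Lambda_n$ and an operator acting only on the ``interface'' $\Lambda_{[n,n+1]}$. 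The key algebraic point is that $\Lambda_{n-1}$, $W_n$, $W_{n+1}$ are mutually disjoint with $\Lambda_n=\Lambda_{n-1}\cup W_n$ and $\Lambda_{[n,n+1]}=W_n\cup W_{n+1}$, so that $\mathcal{A}_{\Lambda_n}\cap\mathcal{A}_{\Lambda_{[n,n+1]}}=\mathcal{A}_{W_n}$ by \eqref{T2}. This is exactly the Markov triplet structure $\mathcal{A}_{\Lambda_{n-1}}\subset\mathcal{A}_{\Lambda_n}\subset\mathcal{A}_{\Lambda_{n+1}}$ with overlap $\mathcal{A}_{W_n}$, which is the setting in which the strong subadditivity of von Neumann entropy holds with \emph{equality} precisely for quantum Markov states.

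Concretely, I would invoke the characterization of quantum Markov states via saturation of strong subadditivity: a state $\varphi$ on $\mathcal{A}_{\Lambda_{n-1}}\otimes\mathcal{A}_{W_n}\otimes\mathcal{A}_{W_{n+1}}$ is a Markov state for this triplet if and only if
\[
S(\varphi\lceil_{\mathcal{A}_{\Lambda_{n+1}}})+S(\varphi\lceil_{\mathcal{A}_{W_n}})=S(\varphi\lceil_{\mathcal{A}_{\Lambda_n}})+S(\varphi\lceil_{\mathcal{A}_{\Lambda_{[n,n+1]}}}),
\]
which is \eqref{entropy_dentity}. To make this self-contained one can instead verify it directly from \eqref{DnDn+1}: write $D^\varphi_{\Lambda_{n+1}}=K\,D^\varphi_{\Lambda_n}\,K^{\ast}$ with $K=K_{[n,n+1]}$, take $-\Tr(\cdot\log\cdot)$ of both sides, and use that $\log(KD^\varphi_{\Lambda_n}K^{\ast})$ can be manipulated against the conditional-expectation identities \eqref{T1} and \eqref{T3}. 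The cleanest route uses that $T^{\Lambda_{n-1},W_n,W_{n+1}}_{W_n,W_{n+1}}(D^\varphi_{\Lambda_{n+1}})=D^\varphi_{\Lambda_{[n,n+1]}}$ together with the fact that $K$ commutes with $\mathcal{A}_{\Lambda_{n-1}}$, so conjugation by $K$ does not change the $\Lambda_{n-1}$-marginal structure relative to $W_n$; the relative entropy of $D^\varphi_{\Lambda_{n+1}}$ against the product $D^\varphi_{\Lambda_n}\otimes(\text{suitable factor on }W_{n+1})$ then reorganizes into the claimed identity. I expect the main obstacle to be writing this marginalization argument rigorously without circularity, i.e. confirming that the quasi-conditional expectation underlying $\varphi$ genuinely yields the strong-subadditivity equality rather than just the inequality; this is where the hypothesis that $\varphi$ is a QMS (as opposed to a general QMC) is essential, and one should cite the relevant structure theorem from \cite{MS19,MS20} or the commuting-square criterion.

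For the second part, assume $K_{[n,n+1]}$ and $D^\varphi_{\Lambda_n}$ commute. Then from \eqref{DnDn+1},
\[
D^\varphi_{\Lambda_{n+1}}=K_{[n,n+1]}K_{[n,n+1]}^{\ast}\,D^\varphi_{\Lambda_n}=|K_{[n,n+1]}^{\ast}|^{2}\,D^\varphi_{\Lambda_n},
\]
and since all factors commute we may take logarithms additively:
\[
\log D^\varphi_{\Lambda_{n+1}}=\log D^\varphi_{\Lambda_n}+\log\big(K_{[n,n+1]}K_{[n,n+1]}^{\ast}\big).
\]
Now $S(\varphi\lceil_{\mathcal{A}_{\Lambda_{n+1}}})=-\varphi\big(\log D^\varphi_{\Lambda_{n+1}}\big)$ and likewise on $\Lambda_n$ (using that $\varphi$ restricted agrees with the partial trace against $D^\varphi$), so subtracting gives
\[
S(\varphi\lceil_{\mathcal{A}_{\Lambda_{n+1}}})-S(\varphi\lceil_{\mathcal{A}_{\Lambda_{n}}})=-\varphi\big(\log(K_{[n,n+1]}K_{[n,n+1]}^{\ast})\big).
\]
By \eqref{Knn+1=prodKxs(x)}, $K_{[n,n+1]}=\bigotimes_{x\in W_n}K_{\{x\}\cup S(x)}$, so $\log(K_{[n,n+1]}K_{[n,n+1]}^{\ast})=\sum_{x\in W_n}\log|K_{\{x\}\cup S(x)}^{\ast}|^{2}=2\sum_{x\in W_n}\log|K_{\{x\}\cup S(x)}|$ (identifying $S(x)$ with $\overrightarrow{S}(x)$), and since each summand lies in $\mathcal{A}_{\Lambda_{[n,n+1]}}$ we may replace $\varphi$ by its restriction $\varphi_{\mathcal{A}_{\Lambda_{[n,n+1]}}}$. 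This yields \eqref{S(phi_n+1)-S(phi_n)}. The only delicate points here are the passage from $KK^{\ast}$ to $|K|^{2}$ under commutativity (harmless since $\log(KK^{\ast})=\log(K^{\ast}K)=2\log|K|$ when $K$ is normal, which commutativity with the faithful $D^\varphi_{\Lambda_n}$ effectively forces on the relevant support) and the interchange of $\log$ with the tensor decomposition, both of which are routine.
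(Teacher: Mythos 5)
Your proposal follows essentially the same route as the paper: the identity \eqref{entropy_dentity} is obtained from the saturation of strong subadditivity for Markov states (the paper phrases this via Petz sufficiency of the trace-preserving conditional expectation $T^{\Lambda_{n+1}}_{\Lambda_n}$ and a strong-additivity result of Moriya \cite{Mo06}), and the commuting case is the same direct computation from Lemma \ref{lem_density} combined with the factorization \eqref{Knn+1=prodKxs(x)}. The one caveat is your parenthetical claim that commuting with the faithful $D^\varphi_{\Lambda_n}$ forces $K_{[n,n+1]}$ to be normal --- it does not (take $D^\varphi_{\Lambda_n}$ a multiple of the identity) --- but the resulting $K_{[n,n+1]}K_{[n,n+1]}^{\ast}$ versus $K_{[n,n+1]}^{\ast}K_{[n,n+1]}$ ambiguity in $|K_{[n,n+1]}|$ is equally present in the paper's own argument and is harmless when the $K$'s are self-adjoint, as in the paper's example.
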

\begin{proof}
Since $\varphi$ is a Markov state on $\mathcal{A}_V$ with respect to the filtration $(\mathcal{A}_{\Lambda_n})_n$, for each $n$ there exists a quasi-conditional expectation $E_{n}$ with respect to the triplet  $\mathcal{A}_{\Lambda_{n-1}}\subset\mathcal{A}_{\Lambda_n}\subset\mathcal{A}_{\Lambda_{n+1}}$ satisfying
$$
\varphi\lceil_{\mathcal{A}_{\Lambda_n}}\circ T_{\Lambda_n}^{\Lambda_{n+1}}\circ E_n = \varphi\lceil_{\mathcal{A}_{\Lambda_n}}\circ E_n = \varphi\lceil_{\mathcal{A}_{\Lambda_{n+1}}}.
$$
The quasi-conditional expectation $E_n$ can be written as follows
$$
E_n = id_{\Lambda_{n-1}}\otimes E_{[n,n+1]}
$$
where $E_{[n,n+1]}$ is a transition expectation from $\mathcal{A}_{ \Lambda_{[n,n+1]} }$ into $\mathcal{A}_{W_n}$. It follows that
$$
\varphi\lceil_{\mathcal{A}_{W_n}}\circ T_{W_n}^{W_{n+1},W_{n+1}}\circ E_{[n, n+1]} = \varphi\lceil_{\mathcal{A}_{ \Lambda_{[n,n+1]} }}\circ E_n = \varphi\lceil_{\mathcal{A}_{ \Lambda_{[n,n+1]} }}\circ E_n = \varphi\lceil_{\mathcal{A}_{ \Lambda_{[n,n+1]} }}.
$$
Then in the notations of  (\ref{T1}), (\ref{T2}) and (\ref{T3}) for $A= W_{n+1}, B = W_n, C= \Lambda_{n-1}$,  the conditional expectation $T^{\Lambda_{n+1}}_{\Lambda_n}$ is enough (in the sense of \cite{Petez88, Mo06}) for the states $\varphi\lceil_{\mathcal{A}_{\Lambda_{n+1}}}$ and $\varphi\lceil_{\mathcal{A}_{ \Lambda_{[n,n+1]} }}$. Hence, by a result of \cite{Mo06}, we obtain
$$
S(\varphi\lceil_{\mathcal{A}_{\Lambda_{n+1}}})
- S(\varphi_{\mathcal{A}_{W_n, W_{n+1}}})  - S(\varphi\lceil_{\mathcal{A}_{\Lambda_{n}}}) + S(\varphi\lceil_{\mathcal{A}_{W_n}}) = 0
$$
This proves (\ref{entropy_dentity}).

Now assume that   $K_{ [N,n+1] }$ and $D^\varphi_{\Lambda_{n}}$ commute. Since  $T^{\Lambda_{n+1}}_{W_n, W_{n+1}}( D^\varphi_{\Lambda_{n}} ) = D^\varphi_{W_n}\otimes \id_{W_{n+1}}$,
by Lemma \ref{lem_density} the density matrices of $\varphi_{\Lambda_{n+1}}$ and $\varphi_{W_n, W_{n+1}}$ are $ D^\varphi_{\Lambda_n}|K_{[n,n+1]}|^2$ and $ D^\varphi_{\Lambda_[n,n+1]} = D^\varphi_{\Lambda_n} |K_{[n,n+1]}|^2$, respectively.
This leads to
\begin{eqnarray*}
  S(\varphi\lceil_{\mathcal{A}_{\Lambda_{n+1}}}) &=&    -\Tr\left( D^\varphi_{\Lambda_{n}} |K_{ [N,n+1] }|^2\log ( D^\varphi_{\Lambda_{n}} |K_{ [N,n+1] }|^2)\right) \\
  &=& -\Tr\left( D^\varphi_{\Lambda_{n}} \log  D^\varphi_{\Lambda_{n}}\mathcal{E}_{ [n,n+1] } (\id)\right)
  -2\Tr\left( D^\varphi_{\Lambda_{n}}|K_{[n,n+1]}|^2 \log  |K_{ [n,n+1] }|\right)\\
  &=& S(\varphi\lceil_{\mathcal{A}_{\Lambda_{n}}})
  -2\varphi\lceil_{\mathcal{A}_{{[n,n+1]}}}\left( \log|K_{{[n,n+1]}}| \right).
\end{eqnarray*}
On the other hand, we find
\begin{eqnarray*}
  S(\varphi\lceil_{\mathcal{A}_{\Lambda_{[n, n+1]}}}) &=& - \Tr( D^\varphi_{W_n} |K_{[n,n+1]}|^2\log(D^\varphi_{W_n} |K_{[n,n+1]}|^2)) \\
   &=&  - \Tr\left( D^\varphi_{W_n} \log\left(D^\varphi_{W_n}\right)\mathcal{E}_{ [N,n+1] }(\id)\right) - 2\Tr\left( D^\varphi_{W_n} |K_{[n,n+1]}|^2 \log |K_{[n,n+1]}| \right) \\
    &=&  S(\varphi\lceil  \mathcal{A}_{W_{ n}}) - 2\varphi\lceil_{\mathcal{A}_{ [N,n+1] }}\left( \log|K_{ [N,n+1] }| \right)\\
    &\overset{(\ref{Knn+1=prodKxs(x)})}{=}&  S(\varphi\lceil  \mathcal{A}_{W_{ n}})-2 \sum_{x\in W_{n}}\varphi_{\mathcal{A}_{\Lambda_{[n, n+1]}}}\left( \log |K_{\{x\}\cup \overrightarrow{S}(x)}|\right).
\end{eqnarray*}
which proves \eqref{S(phi_n+1)-S(phi_n)}.
\end{proof}

The following result consists a generalization to trees of a formula of the mean entropy for translation-invariant quantum Markov states on the quantum spin lattice \cite{OhyaPetz}.

\begin{corollary}\label{Coro_mean} Assume that conditions of Theorem \ref{thm_entropy1} are satisfied. If a QMS  $\varphi$ is translation-invariant then
\begin{equation}\label{mean_ent_formula}
\frac{S(\varphi_{\Lambda_{n+1}})-  S(\varphi_{\Lambda_{n}})}{|W_n| } =  S(\varphi_{\Lambda_{1]}})-  S(\varphi_{\Lambda_{0]}}).
\end{equation}
\end{corollary}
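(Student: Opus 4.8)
The plan is to feed the translation-invariance hypothesis into the identity \eqref{S(phi_n+1)-S(phi_n)} of Theorem~\ref{thm_entropy1} --- which is available for every $n\ge 0$ because the commutation hypothesis of that theorem is among the assumed ``conditions'' --- and to show that all $|W_n|$ summands on its right-hand side take one and the same value. Fix $n\ge 0$ and $x\in W_n$, and recall that the shift $\alpha_x$ is a $*$-isomorphism with $\alpha_x(x^0)=x$ carrying $S(x^0)$ onto $S(x)$; hence it maps the ``cherry'' $\{x^0\}\cup\overrightarrow{S}(x^0)=\Lambda_1$ onto $\{x\}\cup\overrightarrow{S}(x)$ and $\mathcal{A}_{\Lambda_1}$ onto $\mathcal{A}_{\{x\}\cup\overrightarrow{S}(x)}$. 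The first step is to observe that translation-invariance of $\varphi$ forces the canonically associated localized structure to be shift-covariant: by the structure theory of \cite{MS19,MS20} the quasi-conditional expectations $E_{\Lambda_n}$, and with them the localized transition expectations $\mathcal{E}_x$, are uniquely determined by $\varphi$, so $\varphi\circ\alpha_j=\varphi$ propagates to $\mathcal{E}_x=\alpha_x\circ\mathcal{E}_{x^0}\circ\alpha_x^{-1}$. Since the left-hand side of \eqref{S(phi_n+1)-S(phi_n)} is intrinsic, while the right-hand side is an honest functional of a valid choice of conditional density amplitudes, one may choose the amplitudes covariantly, $K_{\{x\}\cup\overrightarrow{S}(x)}=\alpha_x\!\left(K_{\Lambda_1}\right)$ with $K_{\Lambda_1}:=K_{\{x^0\}\cup\overrightarrow{S}(x^0)}$, whence $|K_{\{x\}\cup\overrightarrow{S}(x)}|=\alpha_x\!\left(|K_{\Lambda_1}|\right)$.

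Granting this, the computation is short. Since $\alpha_x$ is a $*$-homomorphism it commutes with the continuous functional calculus, in particular with $\log|\cdot|$, so $\log|K_{\{x\}\cup\overrightarrow{S}(x)}|=\alpha_x\!\left(\log|K_{\Lambda_1}|\right)$; combining this with $\varphi\circ\alpha_x=\varphi$ yields
\[
\varphi_{\mathcal{A}_{\Lambda_{[n,n+1]}}}\!\left(\log|K_{\{x\}\cup\overrightarrow{S}(x)}|\right)=\varphi\!\left(\alpha_x\!\left(\log|K_{\Lambda_1}|\right)\right)=\varphi_{\mathcal{A}_{\Lambda_1}}\!\left(\log|K_{\Lambda_1}|\right),
\]
which is independent of $x\in W_n$. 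Substituting into \eqref{S(phi_n+1)-S(phi_n)} gives $S(\varphi_{\Lambda_{n+1}})-S(\varphi_{\Lambda_n})=-2\,|W_n|\,\varphi_{\mathcal{A}_{\Lambda_1}}\!\left(\log|K_{\Lambda_1}|\right)$. Specializing to $n=0$, where $W_0=\{x^0\}$, $|W_0|=1$ and $\Lambda_{[0,1]}=\Lambda_1$, identifies the constant: $S(\varphi_{\Lambda_1})-S(\varphi_{\Lambda_0})=-2\,\varphi_{\mathcal{A}_{\Lambda_1}}\!\left(\log|K_{\Lambda_1}|\right)$. Dividing the former identity by $|W_n|$ and substituting this identification gives exactly \eqref{mean_ent_formula}.

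The step I expect to be the main obstacle is the covariance claim $|K_{\{x\}\cup\overrightarrow{S}(x)}|=\alpha_x(|K_{\Lambda_1}|)$. It rests on the uniqueness of the localized quasi-conditional-expectation data attached to a QMS on the Cayley tree (from \cite{MS19,MS20}) and on a careful treatment of the fact that conditional density amplitudes are not themselves unique --- only their moduli $|K|$ enter \eqref{S(phi_n+1)-S(phi_n)}, and the value of its right-hand side coincides with the intrinsic quantity $S(\varphi_{\Lambda_{n+1}})-S(\varphi_{\Lambda_n})$, so passing to the shift-covariant representative is legitimate. Once covariance is in place, everything downstream --- the $*$-homomorphism property of $\alpha_x$, invariance of $\varphi$, summing $|W_n|$ identical terms, and the $n=0$ specialization --- is entirely routine.
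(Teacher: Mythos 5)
Your proposal is correct and follows essentially the same route as the paper: the paper's own proof consists precisely of asserting the translation-invariance identity $\varphi(\log|K_{\{x\}\cup\overrightarrow{S}(x)}|)=\varphi_{\Lambda_1}(\log|K_{\{x^0\}\cup\overrightarrow{S}(x^0)}|)$ for all $x$ and then invoking \eqref{S(phi_n+1)-S(phi_n)} (implicitly also at $n=0$ to identify the constant), exactly as you do. The only difference is that you supply the shift-covariance justification for that identity, which the paper states without proof.
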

\begin{proof}
One has
$$
\varphi_{\Lambda_{n+1}}(\log |K_{\{x\}\cup \overrightarrow{S}(x)}|) =  \varphi_{\Lambda_{1]}}(\log |K_{\{o\}\cup \overrightarrow{S}(o)}|), \quad \forall x\in L.
$$
Then \eqref{S(phi_n+1)-S(phi_n)} leads to (\ref{mean_ent_formula}).
\end{proof}

It is known that,  the mean entropy (\ref{mean_entropy1D}) of a QMS on the one-dimensional spin lattice is obtained  by the following limit
$$
s(\varphi) = \lim_{n\to\infty} S(\varphi\lceil_{\mathcal{A}_{[0,n+1]}}) - S(\varphi\lceil_{\mathcal{A}_{[0,n]}})
$$
The following theorem proposes an extension of this result to the case of trees.

 \begin{theorem}
The mean entropy of the Quantum Markov state $\varphi$ satisfies
\begin{equation}\label{mean_entropy_form}
s(\varphi) = \lim_{n\to \infty}\frac{S(\varphi \lceil_{\mathcal{A}_{\Lambda_{n+1}}}) -S(\varphi \lceil_{\mathcal{A}_{\Lambda_{n}}}) }{|\Lambda_{n+1}|- |\Lambda_{n}| }
\end{equation}
If in addition, $ \varphi$ is translation-invariant then
\begin{equation}\label{m_entr_hom}
s(\varphi)  = \frac{S(\varphi \lceil_{\mathcal{A}_{\Lambda_{1}}}) -S(\varphi \lceil_{\mathcal{A}_{\Lambda_{0}}})}{k}.
\end{equation}
\end{theorem}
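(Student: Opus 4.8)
The plan is to obtain the general identity (\ref{mean_entropy_form}) from the Stolz--Cesàro theorem, and then to evaluate the limit it produces by feeding in Corollary \ref{Coro_mean} in the translation-invariant case.

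First I would set $a_n := S(\varphi\lceil_{\mathcal{A}_{\Lambda_n}})$ and $b_n := |\Lambda_n|$. On $\Gamma^k_+$ one has $b_n = \sum_{j=0}^n k^j$, so $(b_n)$ is strictly increasing, $b_n\to\infty$, and $b_{n+1}-b_n = |W_{n+1}| = k^{n+1}>0$. The $\infty/\infty$ Stolz--Cesàro lemma then applies: if $L := \lim_{n\to\infty}\frac{a_{n+1}-a_n}{b_{n+1}-b_n}$ exists, then $\lim_{n\to\infty}\frac{a_n}{b_n}=L$. I would include the short proof of this instance for completeness: writing $a_{n+1}-a_n = (L+\eps_n)(b_{n+1}-b_n)$ with $\eps_n\to0$ and summing over $j<n$, one uses that for positive weights $w_j:=b_{j+1}-b_j$ with $\sum_j w_j=\infty$ one has $\sum_{j<n}\eps_j w_j = o\bigl(\sum_{j<n}w_j\bigr)$, and then divides by $b_n$. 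Since $s(\varphi)=\lim_n a_n/b_n$ by the definition (\ref{mean_entropy_tree}), this establishes (\ref{mean_entropy_form}) provided the limit on its right-hand side exists.

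For the translation-invariant refinement (\ref{m_entr_hom}) I would verify that this right-hand limit exists and compute it. By Corollary \ref{Coro_mean},
\[
\frac{S(\varphi_{\Lambda_{n+1}})-S(\varphi_{\Lambda_n})}{|W_n|} = S(\varphi_{\Lambda_1})-S(\varphi_{\Lambda_0})
\]
for every $n\ge0$. Since each vertex of $W_n$ has exactly $k$ direct successors, $|W_{n+1}|=k\,|W_n|$, hence $|\Lambda_{n+1}|-|\Lambda_n|=|W_{n+1}|=k\,|W_n|$; dividing the identity above through by $k$ gives
\[
\frac{S(\varphi_{\Lambda_{n+1}})-S(\varphi_{\Lambda_n})}{|\Lambda_{n+1}|-|\Lambda_n|} = \frac{1}{k}\bigl(S(\varphi_{\Lambda_1})-S(\varphi_{\Lambda_0})\bigr),
\]
which does not depend on $n$. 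In particular the limit in (\ref{mean_entropy_form}) exists and equals the right-hand side, so the first part yields (\ref{m_entr_hom}).

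The calculation itself is routine; the point that requires care is the hypothesis of the Stolz--Cesàro step in the general (not translation-invariant) case, namely the existence of $L=\lim_n(a_{n+1}-a_n)/(b_{n+1}-b_n)$. By (\ref{entropy_dentity}) one has $a_{n+1}-a_n = S(\varphi_{\Lambda_{[n,n+1]}})-S(\varphi_{W_n})$, so $L$ is the limit of these conditional-type entropy increments normalised by $|W_{n+1}|$. Unlike the one-dimensional situation, its existence is not handed to us by subadditivity, precisely because the tree is non-amenable: $|W_{n+1}|/|\Lambda_{n+1}|\to(k-1)/k\neq0$, so the last level is of the same order as the whole ball and the difference quotient is genuinely different from a Cesàro average. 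I would therefore state (\ref{mean_entropy_form}) under the proviso that the right-hand limit exists --- which, as shown above, is automatic once $\varphi$ is translation-invariant --- and make this dependence explicit rather than hide it.
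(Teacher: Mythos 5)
Your proof is correct, but it runs in the opposite logical direction to the paper's. The paper writes the difference quotient as
\[
\frac{S(\varphi\lceil_{\mathcal{A}_{\Lambda_{n+1}}})-S(\varphi\lceil_{\mathcal{A}_{\Lambda_{n}}})}{|\Lambda_{n+1}|-|\Lambda_n|}
=\frac{|\Lambda_{n+1}|}{|W_{n+1}|}\cdot\frac{S(\varphi\lceil_{\mathcal{A}_{\Lambda_{n+1}}})}{|\Lambda_{n+1}|}-\frac{|\Lambda_{n}|}{|W_{n+1}|}\cdot\frac{S(\varphi\lceil_{\mathcal{A}_{\Lambda_{n}}})}{|\Lambda_{n}|},
\]
and lets $n\to\infty$ using $|\Lambda_{n+1}|/|W_{n+1}|\to k/(k-1)$ and $|\Lambda_n|/|W_{n+1}|\to 1/(k-1)$, whose difference is $1$; this \emph{presupposes} that the mean entropy $s(\varphi)=\lim_n S(\varphi\lceil_{\mathcal{A}_{\Lambda_n}})/|\Lambda_n|$ exists and deduces that the increment limit exists and coincides with it. (It is exactly the non-amenability you point out --- the last shell being a positive fraction of the ball --- that makes these coefficients bounded and the argument work; on $\mathbb{Z}$ they would blow up.) You instead invoke Stolz--Ces\`{a}ro to pass from the increment limit to the Ces\`{a}ro limit. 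Both conditional implications are valid and yield the same formula, but the hypotheses differ, and your choice is arguably the better one: the paper never proves that the limit defining $s(\varphi)$ in (\ref{mean_entropy_tree}) exists (the one-dimensional subadditivity argument is not available here), whereas in the translation-invariant case your route verifies its hypothesis outright via Corollary \ref{Coro_mean}, since the increments are constant, and thus actually establishes the existence of $s(\varphi)$ together with (\ref{m_entr_hom}). Your explicit proviso on the existence of the increment limit in the general case is the honest way to state (\ref{mean_entropy_form}); the paper's version needs the symmetric proviso that $s(\varphi)$ exists. The computation $|\Lambda_{n+1}|-|\Lambda_n|=|W_{n+1}|=k|W_n|$ and the resulting value $\bigl(S(\varphi\lceil_{\mathcal{A}_{\Lambda_1}})-S(\varphi\lceil_{\mathcal{A}_{\Lambda_0}})\bigr)/k$ agree with the paper's conclusion.
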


\begin{proof}
One has  
\begin{eqnarray*}
  \frac{S(\varphi \lceil  \mathcal{A}_{\Lambda_{n+1}}) -S(\varphi \lceil  \mathcal{A}_{\Lambda_{n}}) }{|\Lambda_{n+1}| -  |\Lambda_{n}|}
&=& \frac{|\Lambda_{n+1}|}{|\Lambda_{n+1}|}\frac{S(\varphi \lceil  \mathcal{A}_{\Lambda_{n+1}})}{|\Lambda_{n+1}|}-\frac{|\Lambda_{n}|}{|\Lambda_{n+1}|}
\frac{S(\varphi \lceil  \mathcal{A}_{\Lambda_{n}}) }
{|\Lambda_{n}|} \\
&=& \frac{k^{n+2}-1}{(k-1)k^{n+1}}\frac{S(\varphi \lceil  \mathcal{A}_{\Lambda_{n+1}})}{|\Lambda_{n+1}|}-\frac{k^{n+1}-1}{(k-1)k^{n}}
\frac{S(\varphi \lceil  \mathcal{A}_{\Lambda_{n}})}
{|\Lambda_{n}|}.
\end{eqnarray*}
This leads to  (\ref{mean_entropy_form}).
In the translation-invariant case,  by Corollary \ref{Coro_mean} we arrive at (\ref{m_entr_hom}).
\end{proof}

\begin{example} Let us consider the semi-infinite Cayley tree of order two $\Gamma^+_2$. Let $\mathcal{A}_{x}=M_{2}(\mathbb{C})$ for $x\in \mathcal{A}_{V}$. By $\sigma_{x}^{u}$,
$\sigma_{y}^{u}$, $\sigma_{z}^{u}$ we denote the standard Pauli spin matrices at site $u\in V$, i.e.
$$
\id^{(u)}=\begin{pmatrix}
1 & 0\\
0& 1
\end{pmatrix},\ \sigma_{x}^{(u)}=\begin{pmatrix}
0 & 1\\
 1& 0
\end{pmatrix} , \  \sigma_{y}^{(u)}=\begin{pmatrix}
0 & -i\\
 i& 0
\end{pmatrix}, \ \sigma_{z}^{(u)}=\begin{pmatrix}
1 & 0\\
 0& -1
\end{pmatrix}.
$$
For every vertices  $(u,(u,1),(u,2))$  we put
\begin{eqnarray}\label{1Kxy1}
&&K_{<u,(u,i)>}=\exp\{\b H_{u,(u,i)>}\}, \ \ i=1,2,\ \b>0,\\[2mm] \label{1Lxy1} &&
L_{>(u,1),(u,2)<}=\exp\{J\beta H_{>(u,1),(u,2)<}\}, \ \ J>0,
\end{eqnarray}
where
\begin{eqnarray}\label{1Hxy1}
&&
H_{<u,(u,i)>}=\frac{1}{2}\big(\id^{(u)}\id^{(u,i)}+\sigma^{(u)}_z\sigma^{(u,i)}_z\big),\\[2mm]
\label{1H>xy<1} &&
H_{>(u,1),(u,2)<}=\frac{1}{2}\big(\id^{(u,1)}\id^{(u,2)}+\sigma^{(u,1)}_z\sigma^{(u,2)}_z\big).
\end{eqnarray}

For each $n\in \mathbb{N}$ and $x\in V$, we denote
\begin{eqnarray}\label{K1}
&& A_{x,(x,1),(x,2)}=K_{<x,(x,1)>}K_{<x,(x,1)>}L_{>(x,1),(x,2)<},\\[2mm]
\label{K11} &&K_{[m,m+1]}:=\prod_{x\in \overrightarrow
W_{m}}A_{x,(x,1),(x,2)}, \ \ 1\leq m\leq n,\\[2mm]
 \label{K2} &&
\bh_{n}^{1/2}:=\prod_{x\in\overrightarrow
W_n}(h_{x})^{1/2}, \ \ \  \bh_n=\bh_{n}^{1/2}(\bh_{n}^{1/2})^*\\[2mm]
\label{K3}
&&{\mathbf{K}}_n:=\omega_{0}^{1/2}\prod_{m=1}^{n-1}K_{[m,m+1]}\bh_{n}^{1/2}\\[2mm]
\label{K4} && \mathcal{W}_{n]}:={\mathbf{K}}_n^{*}{\mathbf{K}}_n.
\end{eqnarray}
 Here $w_0\in\mathcal{A}_{o, +}$ is and initial state and $h_x\in\mathcal{A}_{x,+}$ is a family of positive boundary conditions.

Consider the following functional
$\ffi^{(n)}_{w_0,\bh}$, given by
\begin{eqnarray}\label{ffi-ff}
\varphi^{(n)}_{w_0,\bh}(a)=\tr(\cw_{n+1]}(a\otimes\id_{W_{n+1}})),
\end{eqnarray}
for every $a\in \cb_{\Lambda_n}$.
 In \cite{MBS161} we have shown that the compatibility equation (\ref{Markov_state_eq}) is satisfied for the sequence $\varphi^{(n)}_{w_0, \bh}$ under the following conditions:
 \begin{eqnarray}\label{eq1}
&&\Tr(\omega_{0}h_{0})=1, \\
\label{eq2} &&\Tr_{x]}\big({A_{x,(x,1),(x,2)}}h_{(x,1)}h_{(x,2)}
A_{x,(x,1),(x,2)}^{*}\big)=h_x, \ \ \textrm{for every}\ \ x\in L,
\end{eqnarray}
Moreover, it has been established the existence of three quantum Markov states $\varphi_1, \varphi_2$ and $\varphi_\alpha$ associated with three different translation-invariant solutions  $h_1, h_2$ and $h_\alpha$ of (\ref{eq1}) with
 $$
 \alpha =  \frac{4}{e^{2J\beta}(e^{4\beta}+1)+2e^{2\beta}}.
 $$
 In the following we will compute the mean entropy for the Markov state $\varphi_\alpha$. In \cite{MS201} we found the density  matrices $\mathcal{W}_{n]}^{(\alpha)}$ of $\varphi_\alpha$ as follows
 $$\mathcal{W}_{n]}^{(\varphi)} =  \exp\mathcal{H}_{n]}^{(\alpha)}$$
 \begin{equation}\label{densityphialpha}
 \mathcal{H}_{n]}^{(\alpha)}:=\sum_{x\in
W_{n-1]}}\mathcal{H}_{\alpha}^{(x,(x,1),(x,2))}
 \end{equation}
where
$$\mathcal{H}_{\alpha}^{(x,(x,1),(x,2))}:=\id^{\otimes |\{<_{(W_{n-1})}x\}|}\otimes\mathcal{H}^{(x)}_{\alpha}\otimes\id^{\otimes(|\{>_{(W_{n-1})}x\}|-2)},$$
here \small
$$\mathcal{H}_{\alpha}:= \ln\a\id_{M_8(\mathbb{C})}+
   2\left(
     \begin{array}{cccccccc}
       \beta(J+2) &  &  &  &  &  &  &  \\
         & \beta    &  &  &  &  &  &  \\
         &  & \beta    &  &  & \left(0\right) &  &  \\
         &  &  & \beta J &  &  &  &  \\
         &  &  &  & \beta J &  &  &  \\
         &  & \left(0\right) &  &  & \beta &  &  \\
         &  &  &  &  &  & \beta   &  \\
         &  &  &  &  &  &  & \beta(J+2)  \\
     \end{array}
   \right)$$
\normalsize and $\{<_{(W_{n-1})}x\}:=\{y\in W_{n-1}\mid y<x\}$,
$\{>_{(W_{n-1})}x\}:=\{y\in W_{n-1}\mid y>x\}$. In this notation,
the order $"x<y"$ is understood with respect to the lexicographic
order applied to the coordinates of $x$ and $y$.

It follows that
\begin{equation}\label{densityn}
     \mathcal{W}_{n]}^{(\varphi_\alpha)} =  \mathcal{W}_{n-1]}^{(\varphi_\alpha)}\exp\left(\sum_{x\in W_{n-1}}\mathcal{H}_{\alpha}^{(x,(x,1),(x,2))}\right)
\end{equation}
 The von Neuman entropy of $\varphi^{\alpha}\lceil_{\mathcal{A}_{\Lambda_n}}$ satisifies
 \begin{eqnarray*}
 S(\varphi_{\alpha}\lceil_{\mathcal{A}_{\Lambda_n}})
 &= & -\tr\left( \mathcal{W}_{n]}^{(\varphi_\alpha)}\log \mathcal{W}_{n]}^{(\varphi_\alpha)}\right)\\
 &\overset{(\ref{densityn})}{=}&  -\tr\left( \mathcal{W}_{n]}^{(\varphi_\alpha)}\left(\log \mathcal{W}_{n-1]}^{(\varphi_\alpha)} +  \sum_{x\in W_{n-1}}\mathcal{H}_{\alpha}^{(x,(x,1),(x,2))}  \right)\right) \\
&=&  -\tr\left( \mathcal{W}_{n]}^{(\varphi_\alpha)}\log \mathcal{W}_{n-1]}^{(\varphi_\alpha)}\right) -\sum_{x\in W_{n-1}}\tr\left( \mathcal{W}_{n]}^{(\varphi_\alpha)}   \mathcal{H}_{\alpha}^{(x,(x,1),(x,2))}   \right)\\
&=&  -\varphi_\alpha\lceil_{\mathcal{A}_{\Lambda_n}}\left(\log \mathcal{W}_{n-1]}^{(\varphi_\alpha)}\right) -\sum_{x\in W_{n-1}}\varphi_\alpha\lceil_{\mathcal{A}_{\Lambda_n}}\left(   \mathcal{H}_{\alpha}^{(x,(x,1),(x,2))}   \right) \\
&=&  -\varphi_\alpha\lceil_{\mathcal{A}_{\Lambda_{n-1}}}\left(\log \mathcal{W}_{n-1]}^{(\varphi_\alpha)}\right) -\sum_{x\in W_{n-1}}\varphi_\alpha\lceil_{\mathcal{A}_{\{x\}\cup S(x)}}\left(   \mathcal{H}_{\alpha}^{(x,(x,1),(x,2))}   \right)
 \end{eqnarray*}
 One has
 $$
  - \varphi_\alpha\lceil_{\mathcal{A}_{\Lambda_{n-1}}}\left(\log \mathcal{W}_{n-1]}^{(\varphi_\alpha)}\right)  =  -\tr\left( \mathcal{W}_{n-1]}^{(\varphi_\alpha)}\log \mathcal{W}_{n-1]}^{(\varphi_\alpha)}\right) = S(\varphi_{\alpha}\lceil_{\mathcal{A}_{\Lambda_{n-1}}})
 $$
 and since the state $\varphi_\alpha$ is translation-invariant then
 $$
 \varphi_\alpha\lceil_{\mathcal{A}_{\{x\}\cup S(x)}}\left(   \mathcal{H}_{\alpha}^{(x,(x,1),(x,2))}   \right)  = \varphi_\alpha\lceil_{\mathcal{A}_{\{x_0\}\cup S(x_0)}}\left(   \mathcal{H}_{\alpha}^{(x_0,(x_0,1),(x_0,2))}   \right)
 $$
 Therefore
 $$
S(\varphi_{\alpha}\lceil_{\mathcal{A}_{\Lambda_{n}}})  =
S(\varphi_{\alpha}\lceil_{\mathcal{A}_{\Lambda_{n-1}}})  - |W_{n-1}|\varphi_\alpha\lceil_{\mathcal{A}_{\{x_0\}\cup S(x_0)}}\left(   \mathcal{H}_{\alpha}^{(x_0,(x_0,1),(x_0,2))}   \right).
 $$
 This leads to (\ref{mean_ent_formula}). Using (\ref{m_entr_hom}) the mean entropy of the QMS $\varphi_\alpha$ is given by
 \begin{eqnarray*}
 s(\varphi_\alpha) &=& \frac{1}{2}\varphi_\alpha\lceil_{\mathcal{A}_{\{x_0\}\cup S(x_0)}}\left(   \mathcal{H}_{\alpha}^{(x_0,(x_0,1),(x_0,2))}   \right)\\
 &=& \frac{1}{2} \tr\left( \mathcal{H}_{\alpha}^{(x_0,(x_0,1),(x_0,2))} \exp\left( \mathcal{H}_{\alpha}^{(x_0,(x_0,1),(x_0,2))}\right)\right)\\
&=& \alpha\left( (\ln \alpha + 2\beta(J+2))e^{2\beta (J+2)} + 2 (\ln \alpha + 2\beta)e^{2\beta} + (\ln \alpha + 2\beta J)e^{2\beta J}\right)
 \end{eqnarray*}
We can compute the mean entropy of QMS $\varphi_\alpha$  directly from (\ref{mean_entropy_tree}). In fact,
\begin{eqnarray*}
S(\varphi_\alpha\lceil_{\mathcal{A}_{\Lambda_n}}) &=& - \tr\left(\mathcal{W}_{n]}^{(\varphi_\alpha)}\log \mathcal{W}_{n]}^{(\varphi_\alpha)} \right)\\
& \overset{(\ref{densityphialpha})}{=}&  -
\sum_{x\in \Lambda_{n-1}}
\tr\left(\mathcal{W}_{n]}^{(\varphi_\alpha)} \mathcal{H}_{\alpha}^{(x ,(x ,1),(x ,2))}\right)\\
&=&  -
\sum_{x\in \Lambda_{n-1}}
 \varphi_\alpha\lceil_{\mathcal{A}_{\{x\}\cup S(x)}}\left(  \mathcal{H}_{\alpha}^{(x ,(x,1),(x,2))}\right)\\
 &=&  -
|\Lambda_{n-1}|
 \varphi_\alpha\lceil_{\mathcal{A}_{\Lambda_1}}\left(  \mathcal{H}_{\alpha}^{(x_0,(x_0,1),(x_0,2))}\right)
\end{eqnarray*}
Then
\begin{eqnarray*}
s(\varphi_\alpha) &=& \lim_{n\to \infty}\frac{S(\varphi_\alpha\lceil_{\mathcal{A}_{\Lambda_n}})}
{|\Lambda_n|} = \lim_{n\to \infty}\frac{S(\varphi_\alpha\lceil_{\mathcal{A}_{\Lambda_n}})}
{|\Lambda_{n-1}|}\frac{|\Lambda_{n-1}|}{|\Lambda_n|} \\
&=& -\frac{1}{2}\tr\left( \mathcal{H}_{\alpha}^{(x_0,(x_0,1),(x_0,2))} \exp\left( \mathcal{H}_{\alpha}^{(x_0,(x_0,1),(x_0,2))}\right)\right).
\end{eqnarray*}
\end{example}

\section{Mixing property for quantum Markov states on  Cayley trees}

In this section, we consider the semi-infinite  Cayley tree $\Gamma^k_+$ and, it is assumed that $\mathcal{A}_x = M_d$  the $d\times d$ square matrices over the complex field $\mathbb{C}$.  Let $\varphi$ be a QMS on $\mathcal{A}_V$ together with its sequence of localized transition expectations $\mathcal{E}_{W_n} = \bigotimes_{x\in W_n}\mathcal{E}_x$, here $\mathcal{E}_x$ is a transition expectation from $\mathcal{A}_{\{x\}\cup S(x)}$ into $\mathcal{A}_x$. Put $\mathcal{B}_x = \mathcal{E}_x(\mathcal{A}_{\{x\}\cup S(x)})\subseteq\mathcal{A}_x$.

If the QMS $\varphi$ is translation-invariant, then the transition expectations $\mathcal{E}_x$ can be considered as copies of a transition expectation $\mathcal{E}$ from $M_d^{\otimes (k+1)}$ into $M_d$. In this setting, we denote $\mathcal{B}= \mathcal{E}(M_d^{\otimes (k+1)})$.

Now, let $p_1,\dots,p_r$ be the minimal central projections (counted with their multiplicities) of $\mathcal{B}$, so that the reduced algebras $p_i\mathcal{B}p_i\cong M_{d_i}$ for some $d_i\in\mathbb{N}$. Then
$$
\mathcal{B} = \bigoplus_{i=1}^{r}p_i\mathcal{B}p_i= \bigoplus_{i=1}^{r}M_{d_i}.
$$
Define
$$
E(A) = \sum_{i=1}^{r} p_iAp_i
$$
Put $\mathbb{T} = \{z\in\mathbb C:\quad |z|=1 \}.$

\begin{definition}
A  QMS $\varphi$ on $\mathcal{A}$  is said to be \textit{strongly mixing} if it satisfies
$$
\lim_{|g|\to\infty} \varphi(a \alpha_g(b)) = \varphi(a)\varphi(b)
$$
for all $a,b\in\mathcal{A}_V$.
\end{definition}

Any translation-invariant QMS $\varphi$ on $\mathcal{A}_V$ satisfies
\begin{equation}
\varphi\lceil_{\mathcal{A}_{V_g}}\circ\, \alpha_g  = \varphi
\end{equation}
where $V_g = \alpha_g(V)$.

\begin{theorem}\label{thm_mixing}
Every faithful, translation invariant QMS with localized transition expectations on $\mathcal{A}_V$ is strongly mixing.
\end{theorem}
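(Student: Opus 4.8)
The plan is to collapse the tree correlation $\varphi(a\,\alpha_g(b))$ onto a single transfer operator on $\mathcal{A}_o$ and then to show that faithfulness forces that operator to be primitive. By density of $\mathcal{A}_{V;\,loc}$ and $\|\alpha_g(b)\|=\|b\|$ it suffices to treat $a\in\mathcal{A}_{\Lambda_n}$, $b\in\mathcal{A}_{\Lambda_m}$ with $n,m$ fixed; for $|g|$ large the supports of $a$ and $\alpha_g(b)$ are disjoint and, the tree having no cycles, they are joined by a unique path that passes near the root and contains a segment of length $N(g)\to\infty$ descending towards $g$. Contracting $D^\varphi_{\Lambda_N}$ (with $\Lambda_N$ containing both supports) against $a\otimes\alpha_g(b)\otimes\id$ and iterating Lemma \ref{lem_density} together with \eqref{localizedTE} and \eqref{Knn+1=prodKxs(x)}, one sees that the amplitudes $K_{\{x\}\cup S(x)}$ attached to vertices off that path receive only identities and, since every $\mathcal{E}_x$ is identity preserving, collapse, whereas each vertex of the descending segment contributes one copy of the same unital completely positive map
\[
\Phi(y):=\mathcal{E}\big(\id\otimes y\otimes\id^{\otimes(k-1)}\big),\qquad y\in\mathcal{A}_o\cong M_d,
\]
$\mathcal{E}$ being the common localized transition expectation from $M_d^{\otimes(k+1)}$ to $M_d$ (by the structure of translation-invariant QMS in \cite{MS19,MS20} this reduced map is the same at every vertex once all sites are identified with $\mathcal{A}_o$). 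This yields an identity $\varphi\big(a\,\alpha_g(b)\big)=F_a\big(\Phi^{\,N(g)}(G_b)\big)$, where $F_a$ is a bounded functional on $M_d$ depending only on $a$ with $F_a(\id)=\varphi(a)$, and $G_b\in M_d$ depends only on $b$ with $\omega(G_b)=\varphi(b)$, $\omega:=\varphi\lceil_{\mathcal{A}_o}$; translation invariance furnishes these normalizations together with $\omega\circ\Phi=\omega$, and by \cite{Ohn04} faithfulness of $\varphi$ makes $\omega$ faithful.

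Subtracting, $\varphi(a\,\alpha_g(b))-\varphi(a)\varphi(b)=F_a\big(\Phi^{N(g)}(G_b)-\omega(G_b)\,\id\big)$, so strong mixing is \emph{equivalent} to $\Phi^N(y)\to\omega(y)\,\id$ for every $y\in M_d$, i.e.\ to primitivity of the unital Schwarz map $\Phi$ (cf.\ \cite{fannes,F.N.W}); since $\dim M_d<\infty$, this means $\sigma(\Phi)\cap\mathbb{T}=\{1\}$ with $1$ a simple eigenvalue. As $\omega$ is a faithful invariant state, the standard structure theory (see e.g.\ \cite{OhyaPetz}) applies: the peripheral eigenvectors of $\Phi$ lie in the range $\mathcal{B}_o=\mathcal{E}(M_d^{\otimes(k+1)})$ and span a $*$-subalgebra $\mathcal{N}\subseteq\mathcal{B}_o$ on which $\Phi$ acts as a $*$-automorphism permuting cyclically the minimal central projections $p_1,\dots,p_r$, the attached conditional expectation being $E(A)=\sum_ip_iAp_i$. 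It therefore remains to prove $\mathcal{N}=\bC\id$.

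This last implication is the crux and the only step I expect to be genuinely hard. First, local faithfulness forces each $K_{\{x\}\cup S(x)}$ to be invertible: by \eqref{DnDn+1} one has $D^\varphi_{\Lambda_{n+1}}=K_{[n,n+1]}D^\varphi_{\Lambda_n}K_{[n,n+1]}^{\ast}$ with $K_{[n,n+1]}=\bigotimes_{x\in W_n}K_{\{x\}\cup S(x)}$, and a full-rank left-hand side is possible only if $K_{[n,n+1]}$, hence each tensor factor, is invertible. Then, using this invertibility together with the stationarity relation satisfied by the single-site densities (the tree analogue of \eqref{eq2}) and the Markov identity \eqref{Markov_state_eq}, I would argue by contradiction: a nontrivial $\Phi$-invariant projection, or a peripheral eigenvector $u$ with $\Phi(u)=\zeta u$ and $\zeta\in\mathbb{T}\setminus\{1\}$, would, when transported along the descending segment of a geodesic, produce a nonzero positive element of some $\mathcal{A}_{\Lambda_n}$ annihilated by $\varphi$, contradicting the full rank of $D^\varphi_{\Lambda_n}$. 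Hence $\mathcal{N}=\bC\id$ and $\Phi$ is primitive; feeding $\Phi^N(y)\to\omega(y)\,\id$ back into the displayed identity settles the case of local $a,b$, and a $3\varepsilon$-approximation using $\|\alpha_g(b)\|=\|b\|$ and $\|\Phi\|=1$ extends strong mixing to all $a,b\in\mathcal{A}_V$.
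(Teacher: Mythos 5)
Your overall architecture --- collapse the two-point function onto iterates of the single-site transfer operator $\Phi(y)=\mathcal{E}(\id\otimes y\otimes\id^{\otimes(k-1)})$ along the descending segment of the geodesic, then reduce strong mixing to triviality of the peripheral spectrum of $\Phi$ --- matches the paper, which performs the first reduction by appealing to \cite{F.N.W} and works with exactly this map (there called $P_{x;j}$, restricted to the range algebra $\mathcal{B}=\mathcal{E}(M_d^{\otimes(k+1)})$). The problem is the step you yourself flag as the crux: you never actually prove $\mathcal{N}=\bC\id$. The contradiction you sketch --- that a nontrivial peripheral eigenvector would, when transported along a geodesic, produce a nonzero positive element annihilated by $\varphi$ --- does not follow from the hypotheses you invoke: the identity map on $M_d$ is a unital completely positive map admitting a faithful invariant state and having a maximally degenerate peripheral spectrum, and your sketch uses nothing beyond unitality, positivity, $\omega\circ\Phi=\omega$ and faithfulness of $\omega$. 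Invertibility of the amplitudes $K_{\{x\}\cup S(x)}$ (which you correctly extract from \eqref{DnDn+1}) likewise does not exclude peripheral eigenvalues. So the proposal is missing the one ingredient that makes the theorem true.

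That ingredient in the paper is structural. First, a module computation shows that any peripheral eigenvector $B$ of $P_{x;j}$ must lie in the \emph{center} of $\mathcal{B}$, hence is a combination $\sum_i\beta_ip_i$ of the minimal central projections. Then the diagonalization theory of localized QMS transition expectations (\cite{AcLib99,MS19,MS20}) yields $\mathcal{E}_x(p_i\otimes p_{i'}\otimes\id)=\rho_{ii'}(p_i\otimes p_{i'})\,p_i$ as in \eqref{rhoii'}, with $\pi_{ii'}:=\rho_{ii'}(p_i\otimes p_{i'})>0$ by faithfulness; since $P_{x;j}$ is identity preserving the matrix $(\pi_{ii'})$ is stochastic, the eigenvalue equation becomes a Perron--Frobenius problem for a strictly positive matrix, and every $\lambda\in\mathbb{T}\setminus\{1\}$ is excluded while $1$ is simple. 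Without this reduction of the peripheral-spectrum question to a positive matrix acting on the center, your argument stalls exactly where you predicted it would. To repair the proposal you would need to import (or reprove) the centrality of peripheral eigenvectors and the identity \eqref{rhoii'}; the remainder of your reduction is sound and consistent with the paper's route.
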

\begin{proof}
Let $\varphi$ be a  translation-invariant  QMS on $\mathcal{A}_V$ together with its sequence of localized transition  expectations $\mathcal{E}_{W_n}=\bigotimes_{x\in  W_n}\mathcal{E}_{x}$ as in (\ref{localizedTE}), here $\mathcal{E}_x\equiv \mathcal{E}$ is a Umegaki conditional expectation from $\mathcal{A}_{\{x\}\cup S(x)}$ into $\mathcal{A}_x$. For each $x\in W_n$ and $j\in\{1,2,\dots, k\}$,  we put
$$
P_{j}= P_{x; j}:= \mathcal{E}_x\lceil_{\mathcal{A}_x\otimes \mathcal{B}^{(x,j)}\otimes\id_{S(x)\setminus \{(x,j)\}}}.
$$
One can see that $P_{x; j}$ is completely positive and identity preserving from $\mathcal{B}$ into itself. Let $\lambda\in\mathbb{T}$ and $B\in \mathcal{B}\setminus \{0\}$ such that $P_{x;j}(B^{(x,j)})= \lambda B^{(x)}$. Then for every $A' \in \mathcal{B}$,
\begin{eqnarray*}
\lambda (B'B)^{(x)} &=& (B'^{(x)}\otimes \id_{S(x)}) P_{x;j}(B^{(x,j)})\\
&=&  (B'^{(x)}\otimes \id_{S(x)}) \mathcal{E}_x(\id\otimes B^{(x,j)})\\ &=&  \mathcal{E}_x(B'^{(x)}\otimes\id_{S(x)\setminus \{(x,j)\}}\otimes B^{(x,j)})\\
&=& \mathcal{E}_x( \id_{\{x\}\cup S(x)\setminus }\otimes B^{(x,j)})B'^{(x)}\otimes\id_{S(x)} \\
&=& \lambda (BB')^{(x)}\otimes \id_{S(x)}.
\end{eqnarray*}
Then $B$ belongs to the center of $\mathcal{B}$, and there exist $\beta_i\in\mathbb{C} (1\le i\le r$) such that $B = \sum_{i=1}^{r}\beta_ip_i$. From \cite{AcLib99, MS19, MS20} there exist positive linear functional $\rho_{ii'}$ on $M_i\otimes M_{i'}$ such that
\begin{equation}\label{rhoii'}
\mathcal{E}_x(p_i^{(x)}\otimes p_{i'}^{(x,j)}\otimes \id_{S(x)\setminus\{(x,j)\}}) = \rho_{ii'}(p_i\otimes p_{i'}) p_{i}
\end{equation}
One has
\begin{eqnarray*}
P_{x,j}(B) &=& \mathcal{E}_x(\id^{(x)}\otimes B^{(x,j)}\otimes\id_{S(x)\setminus\{(x,j)\}})\\
&=& \sum_{ii'}\mathcal{E}_x(p_{i}\otimes \beta_{i'}p_{i'} \otimes \id_{S(x)\setminus\{(x,j)\}})\\
&=& \sum_{ii'}\beta_{i'}\mathcal{E}_x(p_{i}\otimes p_{i'} \otimes \id_{S(x)\setminus\{(x,j)\}})\\
&\overset{(\ref{rhoii'})}{=}& \sum_{ii'}\beta_{i'}\pi_{ii'}p_{i}\\
\end{eqnarray*}
where $\beta_{ii'} = \rho_{ii'}(p_i\otimes p_{i'})>0$. We have
$$
\left[ \begin{array}{ccc}
\pi_{11} & \cdots & \pi_{1r}\\
\vdots & \ddots & \vdots\\
\pi_{r1} & \cdots & \pi_{rr}
\end{array}\right]
\left[ \begin{array}{lll}
\beta_{1} \\
\vdots \\
\beta_{r}
\end{array}\right] =
\lambda \left[ \begin{array}{lll}
\beta_{1} \\
\vdots \\
\beta_{r}
\end{array}\right]
$$
Since $\pi_{ii'}>0$ for all $ii'$ then by the Perron-Frobenius theorem the map $P_{x,j}$ has trivial peripheral spectrum. Therefore, by \cite{F.N.W} the QMS $\varphi$ is strongly mixing.
\end{proof}

Now, we are going to establish an analogue of \cite[Theorem 3.1]{Go98}.
 \begin{theorem}\label{main}
 Let $\varphi$ be a locally faithful QMS on $\mathcal{A}_V$. Let $(\mathcal{H}, \pi, \Omega )$ be its associated GNS triplet.
 The following assertions hold true.
 \begin{itemize}
 \item[(i)]  The quantum Markov state $\varphi$ on the quasi-local algebra $\mathcal{A}_V$ is separating.
 i.e. the cyclic vector $\Omega$ is cyclic for the algebra $\mathcal{B}:= \pi(\mathcal{A}_V)^{''}$.
 \item[(ii)] The algebra $\mathcal{B} =\pi(\mathcal{A}_V)^{''} $ is a factor.
 \end{itemize}
 \end{theorem}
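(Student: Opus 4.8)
The plan is to deduce both parts from one structural input: a locally faithful QMS $\varphi$ carries a sequence of $\varphi$-preserving conditional expectations onto its finite-dimensional local subalgebras, whose Jones projections increase to $\mathbf{1}$. First I would use the passage from quasi-conditional to genuine conditional expectations indicated after Definition~\ref{QMCdef} together with the structure theory of QMS with localised transition expectations (\cite{AFri83,GZ,MS19,MS20}) to obtain $\varphi$-preserving conditional expectations $\mathcal{G}_n\colon\mathcal{A}_{\Lambda_{n+1}}\to\mathcal{A}_{\Lambda_n}$, and then form the stabilising compositions $\mathcal{F}_n:=\lim_{m\to\infty}\mathcal{G}_n\circ\mathcal{G}_{n+1}\circ\cdots\circ\mathcal{G}_m$: since $\mathcal{G}_j$ fixes $\mathcal{A}_{\Lambda_{n_0}}$ for $j\ge n_0$, this limit stabilises on the local $*$-algebra $\mathcal{A}_{V;\,loc}$ and extends by contractivity to a $\varphi$-preserving conditional expectation $\mathcal{F}_n\colon\mathcal{A}_V\to\mathcal{A}_{\Lambda_n}$ with $\mathcal{F}_n\lceil_{\mathcal{A}_{\Lambda_n}}=\mathrm{id}$. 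As $\mathcal{F}_n$ preserves the GNS state, it induces a $\langle\Omega,\cdot\,\Omega\rangle$-preserving normal conditional expectation $\bar{\mathcal{F}}_n\colon\mathcal{B}=\pi(\mathcal{A}_V)''\to\mathcal{M}_n:=\pi(\mathcal{A}_{\Lambda_n})$, determined by $e_n x e_n=\bar{\mathcal{F}}_n(x)e_n$, where $e_n$ is the orthogonal projection of $\mathcal{H}$ onto $\overline{\pi(\mathcal{A}_{\Lambda_n})\Omega}$; here $e_n\uparrow\mathbf{1}$ because $\Omega$ is cyclic for $\pi(\mathcal{A}_V)$ and $\mathcal{A}_{V;\,loc}$ is norm-dense in $\mathcal{A}_V$.

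For part (i), local faithfulness of $\varphi$ makes $\varphi$ faithful on $\mathcal{A}_V$ (by \cite{Ohn04}), hence $\pi$ faithful and $\Omega$ already separating for $\pi(\mathcal{A}_V)$; likewise $\Omega$ is separating for each finite-dimensional $\mathcal{M}_n$ since $\varphi\lceil_{\mathcal{A}_{\Lambda_n}}$ is a faithful state on the matrix algebra $\mathcal{A}_{\Lambda_n}$. To pass to $\mathcal{B}$, take $x\in\mathcal{B}$ with $x\Omega=0$; then $\bar{\mathcal{F}}_n(x)\Omega=\bar{\mathcal{F}}_n(x)e_n\Omega=e_n x e_n\Omega=e_n x\Omega=0$, and since $\bar{\mathcal{F}}_n(x)\in\mathcal{M}_n$ with $\Omega$ separating for $\mathcal{M}_n$ we get $\bar{\mathcal{F}}_n(x)=0$, i.e. $e_n x e_n=0$, for every $n$; density of $\bigcup_n e_n\mathcal{H}$ then forces $x=0$. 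Thus $\langle\Omega,\cdot\,\Omega\rangle$ is faithful and normal on $\mathcal{B}$, i.e. $\Omega$ is cyclic and separating for $\mathcal{B}$ (so Tomita--Takesaki theory is available for the pair $(\mathcal{B},\Omega)$).

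For part (ii), let $z\in Z(\mathcal{B})=\mathcal{B}\cap\mathcal{B}'$. For any $y\in\mathcal{M}_n$, the module property of the conditional expectation and $zy=yz$ give $\bar{\mathcal{F}}_n(z)y=\bar{\mathcal{F}}_n(zy)=\bar{\mathcal{F}}_n(yz)=y\,\bar{\mathcal{F}}_n(z)$, so $\bar{\mathcal{F}}_n(z)\in Z(\mathcal{M}_n)$. But $\mathcal{M}_n\cong M_d^{\otimes|\Lambda_n|}$ is a factor, so $Z(\mathcal{M}_n)=\mathbb{C}\mathbf{1}$, and $\langle\Omega,\cdot\,\Omega\rangle$-invariance of $\bar{\mathcal{F}}_n$ forces $\bar{\mathcal{F}}_n(z)=\langle\Omega,z\Omega\rangle\mathbf{1}$. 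Hence $e_n z e_n=\langle\Omega,z\Omega\rangle e_n$ for all $n$, and density of $\bigcup_n e_n\mathcal{H}$ yields $z=\langle\Omega,z\Omega\rangle\mathbf{1}$. Therefore $Z(\mathcal{B})=\mathbb{C}\mathbf{1}$ and $\mathcal{B}$ is a factor.

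The step I expect to be the real obstacle is the first one: producing the genuine $\varphi$-preserving conditional expectations onto the full algebras $\mathcal{A}_{\Lambda_n}$ (equivalently, verifying that each $\mathcal{A}_{\Lambda_n}$ is globally invariant under the modular group of $\varphi\lceil_{\mathcal{A}_{\Lambda_{n+1}}}$), which is exactly where local faithfulness of $\varphi$ and the fine structure of QMS with localised transition expectations from \cite{MS19,MS20} (together with \cite{Mo06,AFri83,GZ}) enter. Once this is in hand, parts (i) and (ii) follow from the two short arguments above. For a \emph{translation-invariant} QMS one could alternatively derive factoriality from the strong mixing proved in Theorem~\ref{thm_mixing} together with \cite{F.N.W}; the conditional-expectation route above has the advantage of requiring no translation invariance.
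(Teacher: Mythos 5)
Your architecture (state-preserving conditional expectations onto the local algebras, Jones projections $e_n\uparrow\mathbf{1}$, then the standard separating and factoriality arguments) is a genuinely different route from the paper's, and you have correctly located where it is fragile. For (i) the paper never produces genuine conditional expectations: for $a\in\mathcal{A}_{\Lambda_n}$ it defines $\hat{a}\,\pi(b)\Omega:=\pi(ba)\Omega$ and proves boundedness, with $\|\hat{a}\|\le\big\|(D^{\varphi}_{\Lambda_{n+1}})^{-1/2}a(D^{\varphi}_{\Lambda_{n+1}})^{1/2}\big\|$, by pushing $\varphi(a^{\ast}b^{\ast}ba)$ through the quasi-conditional expectation (its module property over $\mathcal{A}_{\Lambda_n}$ together with \eqref{Markov_state_eq}), using local faithfulness only to invert the local density matrix. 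Hence $\hat{a}\in\pi(\mathcal{A}_V)'$, the vectors $\hat{a}\Omega=\pi(a)\Omega$ are dense, so $\Omega$ is cyclic for the commutant and separating for $\mathcal{B}$. This uses exactly the data supplied by the definition of a QMS, whereas your argument needs strictly more structure.

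The gap you flagged is real and not merely technical: a locally faithful QMS need not admit $\varphi$-preserving Umegaki conditional expectations $\mathcal{G}_n:\mathcal{A}_{\Lambda_{n+1}}\to\mathcal{A}_{\Lambda_n}$ onto the \emph{full} local algebras. By Takesaki's theorem this would force $\mathcal{A}_{\Lambda_n}$ to be globally invariant under the modular group of $\varphi\lceil_{\mathcal{A}_{\Lambda_{n+1}}}$, and the structure theory of \cite{AFri83,MS19,MS20} only delivers $\varphi$-preserving expectations onto intermediate algebras $\mathcal{R}_n$ with $\mathcal{A}_{\Lambda_{n-1}}\subseteq\mathcal{R}_n\subseteq\mathcal{A}_{\Lambda_n}$, which are typically direct sums with nontrivial centre. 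Your proof of (i) survives this substitution: $\Omega$ is still separating for each finite-dimensional $\mathcal{R}_n$ and the corresponding $e_n$ still increase to $\mathbf{1}$, so (i) is recoverable along your lines (though less economically than in the paper). Your proof of (ii), however, collapses at the step $Z(\mathcal{M}_n)=\mathbb{C}\mathbf{1}$: with $\mathcal{M}_n$ replaced by $\mathcal{R}_n$ you only obtain $\bar{\mathcal{F}}_n(z)\in Z(\mathcal{R}_n)$, which need not be scalar. For what it is worth, the paper's own proof of (ii) is a one-line appeal to strong mixing (Theorem~\ref{thm_mixing}) plus \cite{BR}; as you rightly observe, that argument requires translation invariance and localized transition expectations, which Theorem~\ref{main} does not assume, so neither your route nor the paper's actually establishes (ii) in the stated generality.
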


 \begin{proof} (i).
  Let $(\mathcal{H}, \pi, \Omega)$ be the GNS-triplet associated with the state locally faithful state $\varphi$.
Let $a\in\mathcal{A}_{\Lambda_{n}}$, we define an operator  $\hat{a}$ as
 $$
\hat{a}\pi(b)\Omega = \pi(ba)\Omega; \quad \forall b\in \mathcal{A}_{L, loc}.
 $$
   The operator $\hat{a}$ acts on $\pi(\mathcal{A}_{L})$ and commutes with $\pi(b)$ for every $b\in \mathcal{A}_{L, loc}$.

    On the other hand, we have
   \begin{eqnarray*}
   \langle \hat{a} \pi(b) \Omega,  \hat{a} \pi(b) \Omega \rangle
   &=&\varphi(a^\ast b^\ast  ba ) \\
   &=& \varphi(E_{n+1]}(a^\ast b^\ast  ba))\\
   &=& \varphi(a^\ast E_{n+1]}( b^\ast b)a)\\
    &=& \Tr\left((D_{\Lambda_{n+1}}a^\ast E_{n+1]}( b^\ast b)a \right)\\
   &=& \Tr\left(D_{\Lambda_{n+1}}^{\varphi})^{1/2}
    E_{n+1]}( b^\ast b)(D_{\Lambda_{n+1}}^{\varphi})^{1/2}\Big|(D_{\Lambda_{n+1}}^{\varphi})^{-1/2}a(D_{\Lambda_{n+1}}^{\varphi})^{1/2}\Big|^2\right)\\
    &\le & \Tr\left((D_{\Lambda_{n+1}}^{\varphi})^{1/2}
    E_{n+1]}( b^\ast b)(D_{\Lambda_{n+1}}^{\varphi})^{1/2}\right)\Big\|(D_{\Lambda_{n+1}}^{\varphi})^{-1/2}a(D_{\Lambda_{n+1}}^{\varphi})^{1/2}\Big\|^2.
   \end{eqnarray*}
and thanks to (\ref{Markov_state_eq}), one get 
\begin{eqnarray*}
\Tr\left((D_{\Lambda_{n+1}}^{\varphi})^{1/2}
    E_{n+1]}( b^\ast b)(D_{\Lambda_{n+1}}^{\varphi})^{1/2}\right)& =& \Tr\left( D_{\Lambda_{n+1}}^{\varphi}
    E_{n+1]}( b^\ast b)\right)\\
&=& \varphi( E_{n+1]}( b^\ast b))\\
&=& \varphi(b^\ast b)\\
&=& \| \pi(b) \Omega\|^2.
\end{eqnarray*}
Thus
$$
\|\hat{a}\| \le \Big\|(D_{\Lambda_{n+1}}^{\varphi})^{-1/2}a(D_{\Lambda_{n+1}}^{\varphi})^{1/2}\Big\|.
$$
It follows that $\hat{a}\in \pi(\mathcal{A}_L)'$. The set $\{\hat{a}\Omega= \pi(a) \Omega : a\in \mathcal{A}_{L, loc} \}$ is dense in $\mathcal{H}$. This means that
$\Omega$ is cyclic for $\pi(\mathcal{A}_L)^{'}$, equivalently $\Omega$ is separating for $\pi(\mathcal{A}_L)^{''}$. This proves (i).

Now according to Theorem \ref{thm_mixing}, from \cite{BR} we arrive at (ii).
%
%
 \end{proof}


\section*{Credit Author Statement}
The all authors of this paper are equally contributed for the realization of the results.

\section*{Declaration of Competing Interest}

The authors confirm that there are no known conflicts of interest associated with this publication and there has been no significant financial support for this work that could have influenced its outcome.

\section*{Data availability}
The paper does not use any data.

\section*{Acknowledgments}

The authors gratefully acknowledge Qassim University, represented by the Deanship of Scientific Research,
 on the financial support for this research under the number (10173-cba-2020-1-3-I) during the academic year 1442 AH / 2020 AD.

\end{document}